\documentclass[journal,10 pt]{IEEEtran} 
\usepackage{epsfig,color,amsmath,cite}
\usepackage{amsthm} 
\usepackage{amsmath}    
\usepackage[T1]{fontenc}
\usepackage[utf8]{inputenc}
\usepackage{bm}
\usepackage{epstopdf}
\usepackage{amssymb}
\usepackage{url}
\usepackage{enumitem} 
\usepackage{multirow}
\usepackage{hhline}
\usepackage{booktabs}
\usepackage{mathtools}
\usepackage{makecell}
\usepackage[linesnumbered,boxed,commentsnumbered,ruled,vlined,longend]{algorithm2e}
\usepackage{comment}

\DeclareMathOperator*{\minimize}{minimize}
\DeclareMathOperator*{\maximize}{maximize}

\DeclareMathOperator*{\subjectto}{subject\ to}

\makeatother
\DeclareMathAlphabet\mathbfcal{OMS}{cmsy}{b}{n}

\newtheorem{mydef}{Definition}
\newtheorem{mylem}{Lemma}

\newtheorem{myprs}{Proposition}


\makeatletter

\makeatother

\usepackage{stackengine}

\newcommand{\mat}[1]{\boldsymbol{#1}}

\newcommand{\bmat}[1]{\begin{bmatrix} #1 \end{bmatrix}}

\providecommand{\mA}{\ensuremath{\mat{A}}}




\newcommand{\m}{\boldsymbol}
\allowdisplaybreaks[4]
\pdfminorversion=4
\usepackage[colorlinks = true,
linkcolor = blue,
urlcolor  = blue,
citecolor = blue,
anchorcolor = blue]{hyperref}

\newcommand{\mbb}[1]{\mathbb{#1}}
\newcommand{\mr}[1]{\mathrm{#1}}
\usepackage[framemethod=TikZ]{mdframed}
\mdfdefinestyle{MyFrame}{%
	linecolor=black,
	outerlinewidth=1.25pt,
	roundcorner=1.25pt,
	innerrightmargin=5pt,
	innerleftmargin=5pt,}
	

\usepackage[noabbrev]{cleveref}

\usepackage{mathtools}

\DeclarePairedDelimiter\abs{\lvert}{\rvert}%
\DeclarePairedDelimiter\norm{\lVert}{\rVert}%

\makeatletter
\let\oldabs\abs
\def\abs{\@ifstar{\oldabs}{\oldabs*}}
\let\oldnorm\norm
\def\norm{\@ifstar{\oldnorm}{\oldnorm*}}
\makeatother


\usepackage[english]{babel}
\usepackage[utf8]{inputenc}
\usepackage[super]{nth}

\usepackage{graphicx}
\usepackage{float}
\usepackage[caption = false]{subfig}

\usepackage{array}
\usepackage{threeparttable}


\usepackage[english]{babel}
\usepackage[utf8]{inputenc}
\usepackage[super]{nth}

%


\SetKwRepeat{Do}{do}{while}%



\usepackage{lettrine} 
\IEEEoverridecommandlockouts



\title{\Large \vspace{0.7cm} \LARGE \centering {\textsc{\textbf{State-Robust Observability Measures for Sensor Selection in Nonlinear Dynamic Systems}}}}

\author{Mohamad H. Kazm$\text{a}^{1}$, Sebastian A. Nugroh$\text{o}^{2}$, Aleksandar Habe$\text{r}^{3}$, and Ahmad F. Tah$\text{a}^{1}$\vspace{-0.5cm}
	\thanks{
	$^{1}$Department of Civil and Environmental Engineering, Vanderbilt University, 2201 West End Ave, Nashville, Tennessee 37235, USA 
	(mohamad.h.kazma@vanderbilt.edu, ahmad.taha@vanderbilt.edu.)}
	\thanks{
	$^{2}$Cummins Technical Center, Cummins Inc., Columbus, IN 47201 USA  (sebastian.nugroho@cummins.com)}
	\thanks{$^3$Department  of  Manufacturing and Mechanical Engineering Technology,  College of Engineering Technology, Rochester Institute of Technology, 1 Lomb Memorial Dr, Rochester, NY 14623, USA (ml.mecheng@gmail.com).}
	\thanks{This work is partially supported by National Science Foundation (NSF) under Grants 2152450 and 2151571.}
}

\begin{document}

\setlength{\abovedisplayskip}{3.1pt}
\setlength{\belowdisplayskip}{3.1pt}
\setlength{\abovedisplayshortskip}{3.1pt}
\setlength{\belowdisplayshortskip}{3.1pt}
		
\newdimen\origiwspc%
\newdimen\origiwstr%
\origiwspc=\fontdimen2\font
\origiwstr=\fontdimen3\font


\maketitle

\markboth{To Appear in the 62$^{\text{nd}}$ IEEE Conference on Decision and Control (CDC'2023), Singapore, Decemeber 2023}{}

\newcommand\reduline{%
	\bgroup\markoverwith
	{\textcolor{red}{\pgfsetfillopacity{0.2}\rule[-0.5ex]{2pt}{10pt}\pgfsetfillopacity{1}}%
		\textcolor{red}{\llap{\rule[0.4ex]{2pt}{0.4pt}}\llap{\rule[0.7ex]{2pt}{0.4pt}}}%
	}%
	\ULon}

\begin{abstract}
This paper explores the problem of selecting sensor nodes for a general class of nonlinear dynamical networks. In particular, we study the problem by utilizing altered definitions of observability and open-loop lifted observers. The approach is performed by discretizing the system's dynamics using the implicit Runge-Kutta method and by introducing a state-averaged observability measure. The  observability measure is computed for a number of perturbed initial states in the vicinity of the system's true initial state. The sensor node selection problem is revealed to retain the submodular and modular properties of the original problem. This allows the problem to be solved efficiently using a greedy algorithm with a guaranteed performance bound while showing an augmented robustness to unknown or uncertain initial conditions. The validity of this approach is numerically demonstrated on a $H_{2}/O_{2}$ combustion reaction network.
\end{abstract}

\begin{IEEEkeywords}
	Nonlinear Systems, sensor selection, nonlinear observability, discrete systems, greedy algorithm
\end{IEEEkeywords}

\vspace{-0.2cm}
\section{Introduction and Paper Contributions}
\lettrine[lines=2]{T}{he} sensor selection problem is one of the fundamental control engineering problems. The problem is crucial for the control, monitoring, and safe operation of a large number of engineered systems, such as electric power grids \cite{Taylor2017}, municipal water networks \cite{Berry2005}, and transportation systems \cite{Mehr2018}. From a control- and observability-based formulation, this problem aims to find the optimal combination of sensor nodes (graph nodes whose local states should be observed) that optimize appropriate observability measures. 
The goal is to make the system \textit{as observable as possible} using a limited number of sensors to be placed on select nodes in the network. 

Sensor selection problems have gained considerable research interest in recent years as a plethora of methods have been proposed in the literature, especially for \textit{linear} systems. These methods can be categorized based on underlying mathematical approaches, such as network and graph theory \cite{ZHANG2017202,Pequito2016}, sparsity promoting algorithms \cite{Dhingra2014,Argha2019}, semidefinite approximations and relaxations \cite{Taha2019timevarying}, heuristic optimization under convex relaxations~\cite{Joshi2009}, greedy approach under submodular set maximization~\cite{Summers2016Submodularity}, and mixed-integer optimization \cite{Taylor2017,Nugroho2019}. Regardless, methods for solving sensor selection problems for nonlinear dynamic networks are significantly less developed. Only a handful of methods have been proposed so far to address this problem for nonlinear dynamic systems. 

A sensor selection algorithm for target tracking in nonlinear dynamic networks based on a generalized information gain is proposed in \cite{Shen2014sensor}. Next, an empirical observability Gramian approach is utilized in \cite{Qi2015empirical} for placing phasor measurement units in transmission power networks. Another approach based on an open-loop moving horizon estimation for sensor selection and state observation is proposed in \cite{Haber2017}. The approach presented in~\cite{Haber2017} is more numerically tractable than the approaches based on empirical observability Gramians. A new randomized algorithm is presented in \cite{Bopardikar2019} in which theoretical bounds for eigenvalues and condition numbers of observability Gramians are developed. A novel framework is proposed in \cite{Nugroho2019Sensor} for sensor selection and observer design. This approach is developed by using the Lyapunov stability theory and mixed-integer semidefinite optimization. Lastly, methods to place actuators in nonlinear networks that are based on heuristically solving mixed-integer nonlinear optimization problems have been recently developed in \cite{Haber2021}. 

Here it should be emphasized that most of the developed approaches for solving the sensor selection problems, especially the ones involving mixed-integer programs, are \textit{not} necessarily efficient and scalable for large nonlinear dynamic networks. The computational burden of the developed approaches becomes significant even for small or medium-sized nonlinear networks. Another issue with sensor selection problems for nonlinear networks is that,  in practice, the initial states of the system are usually not known \textit{a priori}. This creates model uncertainties and difficulties in formulating and solving the sensor selection problem since the numerically tractable observability-based approach~\cite{Haber2017} for nonlinear systems involves a dependency on initial states. This implies that under such state-dependency any perturbation to the initial state tends to yield in most cases different sensor node selections for the same nonlinear network modeled under similar system parameters.

To partly address the aforementioned limitations, we extend observability-based sensor selection method introduced in~\cite{Haber2017} by introducing state-averaged observability measures for nonlinear networks. That is, instead of utilizing the observability measures associated with a single guess on the initial state, we consider a state-averaged observability metric that relies on several points located around the actual initial state. This allows the constructed observability-based measures to take into account the variabilities resulting from initial conditions perturbations on the sensor selection measures.

Note that due to the structure preserving operations that yield the state-averaged observability Gramian, we show that the observability metrics for sensor selection retain modularity and submodularity properties. This consequently allows the sensor selection problem to be solved through greedy heuristic, and thus make it suitable to solve sensor selection for  large-scale nonlinear dynamic networks. 

Accordingly, the main contributions of this paper are:
\begin{itemize}[leftmargin=*]
	\item We introduce a \textit{state-averaged observability measure} for sensor selection in nonlinear networks. We use a number of points located around the nonlinear system's initial state. By relying on such local state-averaged observability measure, we attain an optimal sensor selection that is robust against unknown or uncertain initial conditions.
	\item We provide theoretical and numerical validation that under such state-averaged observability measures the  submodularity and modularity of the sensor selection objective function is retained. In particular, we leverage the modularity and submodularity of the $\mr{trace}$ and $\mr{log}$-$\mr{det}$ measures of the constructed observability Gramian to perform the sensor selection. Under such formulation, greedy algorithms are employed to solve the combinatorial set optimization problem and as a consequence the selection problem is rendered scalable to large-scale nonlinear dynamic networks.
	\item  We evaluate the validity and robustness of the proposed approach by providing descriptive numerical experiments that showcase the proposed sensor selection strategy. The method is tested on a nonlinear $H_{2}/O_{2}$ combustion reaction network. 
\end{itemize}

This rest of the paper is organized as follows. Section \ref{sec:problem_formulation} introduces the problem formulation. Section \ref{sec:submmodular_greedy} presents some theoretical results pertaining to the state-averaged observability measures. Numerical results are presented in Section \ref{sec:numerical_tests}, and Section \ref{sec:conclusion} concludes the paper.

\vspace{0.2cm}

\noindent {\textit{Paper's Notation:}}~Let $\mathbb{R}$, $\mathbb{R}^n$, and $\mathbb{R}^{p\times q}$ denote the set of real numbers, and real-valued row vectors with size of $n$, and $p$-by-$q$ real matrices. The cardinality of the a set $\mathcal{N}$ is denoted by $|\mathcal{N}|$. The symbol $\otimes$ denotes the Kronecker product. The identity matrix of size $n$ is denoted by $\m{I}_n$.
The operators $\mr{log}$-$\mr{det}(\mA)$ returns the logarithmic-determinant of matrix $\m{A}$, $\mr{trace}(\m{A})$ returns the trace of matrix of matrix $\m{A}$. The operator $\mathrm{diag}\{{a_{i}}\}_{{i}=1}^{\mr{n}} \in \mathbb{R}^{n\times n}$ constructs a block diagonal matrix with scalar $a_i$ as the diagonal entries for all $i\in \{1, \dots, \mr{N}\}$. The operator $\mathrm{col}\{\m{x}_{i}\}_{i=0}^{\mr{N}} \in \mathbb{R}^{\mr{N}.n}$ constructs a column vector that concatenates vectors $\m{x}_i \in \mathbb{R}^{n}$ for all $i \in \{0, \dots, \mr{N}\}$. For any vector $\m x \in \mathbb{R}^{n}$, $\Vert\m x\Vert_2$ denotes the Euclidean norm of $\m x$, defined as $\Vert \m x\Vert_2 := \sqrt{\m x^{\top}\m x} $ , where $\m x^{\top}$ is the transpose of $\m x$.

\section{Preliminaries and Problem Formulation}\label{sec:problem_formulation}
In this section, we introduce mathematical preliminaries and define the problem of selecting sensor nodes. We consider a general nonlinear dynamic network defined in~\eqref{eq:model_CT} under a  continuous-time representation.
\begin{subequations}\label{eq:model_CT} 
	\begin{align}
\dot{\m x}(t) &= \m f(\m x(t)), \label{eq:model_CT_1} \\
	\m y(t) &=  \m{\Gamma} \m{C}{\m x}(t),\label{eq:model_CT_2} 
	\end{align}
\end{subequations}
where $\m x\in\mathbb{R}^{n_x}$ is the global state and $\m y\in \mathbb{R}^{n_{y}}$ is the global output vector. The nonlinear mapping function $\m f:\mbb{R}^{n_x}\rightarrow\mbb{R}^{n_x}$ is assumed to be smooth and at least twice continuously differentiable.
The measurement matrix $\m C\in \mbb{R}^{n_y\times n_x}$ is assumed to be known. The matrix $\m\Gamma := \mathrm{diag}\{{\gamma_j}\}_{{j}=1}^{n_y}\hspace{-0.05cm}\in \mbb{R}^{n_y\times n_y}$ determines the configuration of the sensors---that is, a node $j$ is equipped with a sensor if $\gamma_j= 1$. Otherwise, we simply set $\gamma_j= 0$. We define the parameterize vector $\m \gamma$ that represents the sensor selection, i.e, $\m \gamma \hspace{-0.05cm}=\hspace{-0.05cm}\mathrm{col}\{\gamma_j\}_{j=1}^{n_y}\hspace{-0.05cm}$. Without the loss of generality, we have assumed that the inputs are not affecting the system dynamics.

The objective of the sensor selection problem for the nonlinear dynamics \eqref{eq:model_CT} is to determine the combination of sensors (the $1$ and $0$ patterns in $\m \gamma$) such that an observability-based metric is maximized under a sensor ratio constraint. As such, in order to formulate the binary selection problem, we refer to utilizing a discrete-time representation of the nonlinear state model \eqref{eq:model_CT_1}. 


There exists several methods that can be utilized to obtain a discrete-time model. The choice of discretization method must rely upon 
the system's \textit{stiffness}, desired accuracy, and the performance of computation resources. 
 In this paper, we consider the discretization of \eqref{eq:model_CT} using the \textit{implicit Runge-Kutta} (IRK) method \cite{iserles2008first}. The main advantage of IRK method is that it can be applied to a wider class of nonlinear dynamic networks with various degree of stiffness. Readers can refer to~\cite{Atkinson2011} for the discrete-time modeling techniques of nonlinear systems. The methodology herein results in the following implicit discrete-time state-space model
\begin{align}
	\begin{split}
	\m \zeta_{1,k+1} &= \m x_{k}+ \tfrac{T}{4}\left(\m f(\m \zeta_{1,k+1})-\m f(\m \zeta_{2,k+1})\right), \\
	\m \zeta_{2,k+1} &= \m x_{k}+ \tfrac{T}{12}\left(3\m f(\m \zeta_{1,k+1})+5\m f(\m \zeta_{2,k+1})\right), \\
	\m x_{k+1} &= \m x_{k}+\tfrac{T}{4}\left(\m f(\m \zeta_{1,k+1})+3\m f(\m \zeta_{2,k+1})\right),
	\end{split}\label{eq:TI_dynamics}
\end{align}
where $T > 0$ denotes the discretization period, $k\in\mbb{N}$ is the discrete-time index such that $\m x_k = \m x(kT)$,
and $\m \zeta_{1,k+1},\m \zeta_{2,k+1}\in\mbb{R}^{n_x}$ are auxiliary vectors for computing $\m x_{k+1}$ provided that $\m x_{k}$ is given. Notice that in order to compute $\m x_{k+1}$, we first need to solve a system of nonlinear equations that consists of the first two equations in \eqref{eq:TI_dynamics}. The unknowns in this system are $\m \zeta_{1,k+1},\m \zeta_{2,k+1}$. This layer of complexity is necessary since the introduced discrete-time model can accurately and in a numerically stable manner represent a broad class of nonlinear networks, including networks with stiff dynamics. 
\vspace{-0.5cm}
\subsection{Initial  State Estimation}
\vspace{-0.15cm}
Taking into account the model~\eqref{eq:TI_dynamics}, the discrete-time equivalent of nonlinear dynamic network \eqref{eq:model_CT} can be compactly written in the following form
\begin{subequations}
\begin{align}
	\m x_{k+1} &= \m x_{k} +\tilde{\m f} (\m x_{k+1}, \m x_k), \label{eq:TI_dynamics_compact} \\
	\m y_k &= \m \Gamma \m C \m x_k, \label{eq:DT_measurement_model}
\end{align}
\end{subequations}
where the function $\tilde{\m f}(\cdot)$ in \eqref{eq:TI_dynamics_compact} represents the implicit dynamics in \eqref{eq:TI_dynamics}.  
The proposed approach for sensor selection is developed using the concept of an open-loop lifted observer framework. To that end, we introduce the lifted vector  $\tilde{\m y}\in \mbb{R}^{\mr{N}.n_y}$ that is constructed as $\tilde{\m y} = \mathrm{col}\{\tilde{\m y}_{i}\}_{i=1}^{\mr{N}-1} $. The positive integer $\mr{N}$ is the observation window. For the sake of simplicity, it is assumed temporarily that $\m \Gamma$ is fixed such that the output measurement equation \eqref{eq:DT_measurement_model} is reduced to $ \tilde{ \m y}_k = \tilde{\m C} \m x_k$, where $\tilde{\m C}$ is obtained by compressing the zero rows of $\m\Gamma \m C$. Now, define the vector function $\m h:\mbb{\m R}^{n_x}\rightarrow\mbb{\m R}^{\mr{N}.n_y}$ as 

\begin{equation}~ \label{eq:initial_state_est_fun}
	\underbrace{\bmat{\m h_{0}(\m x_0) \\ \m h_{1}(\m x_0) \\ \vdots \\ \m h_{\mr{N}-1}(\m x_0)}}_{\m h (\m x_0)} := \underbrace{\bmat{\tilde{\m y}_{0} \\ \tilde{\m y}_{1} \\ \vdots \\ \tilde{\m y}_{\mr{N}-1}}}_{\tilde{\m y}} - \underbrace{\bmat{\m g_{0}(\m x_0) \\ \m g_{1}(\m x_0) \\ \vdots \\ 		\m g_{\mr{N}-1}(\m x_0)}}_{\m g (\m x_0)},
\end{equation}
where $\m h(\m x_0) = \mathrm{col}\{\m h_{i}(\m x_0)\}_{i=0}^{\mr{N}-1}$. The function  $\m g:\mbb{\m R}^{n_x}\rightarrow\mbb{\m R}^{\mr{N}.n_y}$ is constructed as $\m g(\m x_0) = \mathrm{col}\{\m g_{i}(\m x_0)\}_{i=0}^{\mr{N}-1}$, where $\m g_{i}:\mbb{\m R}^{n_x}\rightarrow\mbb{\m R}^{n_y}$ and $\m g_{i}:= \tilde{\m C}\m x_{i}$ for all $i\in \{0,1,2,\cdots,\mr{N}-1\}$. It is understood from \eqref{eq:TI_dynamics_compact} that $\m g_{i}$ is a function of only of the initial state $\m x_0$ due to the fact that $\m x_i$ is a recursive function of $\m x_0$ for each $i$.
Consequently, we can write
\begin{align}
	\m h(\m x_0) = 0 \Leftrightarrow \tilde{\m y}  = \m g(\m x_0). \label{eq:initial_state_zero} 
\end{align}

Since in practice the actual initial state is unknown \textit{a priori}, then for a fixed selection of sensors,  $\m x_0$ can be estimated by solving the following nonlinear state estimation optimization problem with a predefined weighting matrix $\m Q \succ 0$ such that
\begin{subequations}
\begin{align}
\hspace{-0.3cm}	{(\mathbf{P1})}\;\;\,\,	 \minimize_{\hat{\m x}_0\in \mathbfcal{X}}\,\,\,\, &{\m h(\hat{\m x}_0)}^\top \m Q {\m h(\hat{\m x}_0)}\\
	\subjectto\,\,\,\, &{\hat{\m x}}^l_0 \leq \hat{\m x}_0 \leq {\hat{\m x}}^u_0,
\end{align}
\end{subequations}
where ${\hat{\m x}}^l_0$ and ${\hat{\m x}}^u_0$ are respectively the lower and upper bounds of $\hat{\m x}_0$ and $\m Q \in \mathbb{R}^{\mr{N}.n_{y}\times \mr{N}.n_{y}}$ is the weighting matrix. The weight matrix  $\m Q$ assigns weights to the measured states such that $\mathbf{P1}$, the initial state estimation problem, efficiently reaches a solution.
\subsection{Observability-based Sensor Node Selection}
Observability of nonlinear discrete-time systems can be quantified using the concept of \textit{uniform observability}~\cite{Hanba2009}. That is, the system \eqref{eq:TI_dynamics_compact} with the measurement model \eqref{eq:DT_measurement_model} is said to be uniformly observable in $\mathbfcal{X}$ ($\mathbfcal{X}$ is the subset representing a local operating region of \eqref{eq:TI_dynamics_compact}) if there exists a finite $\mr{N} \in \mbb{N}$ such that the relation $\tilde{\m y} = \m g\left(\m x_0\right)$ is injective (one-to-one) with respect to $\m x_0\in \mathbfcal{X}$ for any given set of measured outputs $\tilde{\m y}$.


Accordingly, if $\m g\left(\cdot\right)$ is injective with respect to $\m x_0$, then $\m x_0$ can be uniquely determined from the set of measurements $\tilde{\m y}$.

As such, let $\m {J}_g(\cdot)$ be a Jacobian matrix of the function $\m g(\cdot)$ around  $\m x_0$.  
A \textit{sufficient} condition for the mapping $\m g(\cdot)$ to be injective is that the Jacobian matrix of $\m g(\cdot)$ is of full rank~\cite{Hanba2009}.

The Jacobian matrix $\m {J}_g(\cdot)\in\mbb{R}^{\mr{N}.n_y\times n_x}$ is given as 
\begin{align}
	\m {J}_g(\m \gamma,\m x_0) :=  \dfrac{d\m g(\m x_0)}{d\m x_0} 
 =\hspace{-0.05cm}\mathrm{col}\left\{\dfrac{\partial \m g_i(\m x_0)}{\partial \m x_0}\right\}_{{i}=0}^{\mr{N}-1}\hspace{-0.05cm}. \label{eq:obs_jacobian_x0}
\end{align}

For each $i\in \{0,1,2,\cdots,\mr{N}-1\}$, the term $\tfrac{\partial \m g_i(\m x_0)}{\partial \m x_0}$ in \eqref{eq:obs_jacobian_x0} is equivalent to
\begin{align}
	\dfrac{\partial \m g_i(\m x_0)}{\partial \m x_0} = \dfrac{\partial }{\partial \m x_0} \tilde{\m C} \m x_i = \tilde{\m C}  \dfrac{\partial \m x_i}{\partial \m x_0} = \tilde{\m C}\prod_{j=0}^{i-1} \dfrac{\partial \m x_{j+1}}{\partial \m x_j}. \label{eq:partial_g_i}
\end{align}

It is important to mention that the computation of $\tfrac{\partial \m x_i}{\partial \m x_0}$ in \eqref{eq:partial_g_i} requires the knowledge of $\m x_j$ for all $j$. The value of $\m x_j$ 
can be obtained by simulating \eqref{eq:TI_dynamics_compact} with the initial condition $\m x_0$. 
Taking into account the parameterized measurement equation \eqref{eq:DT_measurement_model}, the Jacobian matrix $\m {J}_g(\cdot)$ in \eqref{eq:obs_jacobian_x0} around a specific initial state $\hat{\m x}_0$ is given as
\begin{align}
		\m {J}_g(\m \gamma,\hat{\m x}_0) := \m {J}_g(\hat{\m x}_0) = \bmat{\m I \otimes \m \Gamma \m C}\times {\m \xi(\hat{\m x}_0)},\label{eq:obs_jacobian_x0_param}
\end{align}

where $\m \xi : \mbb{R}^{n_x}\rightarrow\mbb{R}^{\mr{N}.n_x}$, $\m \xi(\hat{\m x}_0) = \mathrm{col}\{\m \xi_{i}(\hat{\m x}_0)\}_{i=0}^{\mr{N}-1}$, and $\m \xi_i := \tfrac{\partial \hat{\m x}_i}{\partial \hat{\m x}_0}$.
Next, we define the matrix function $ \m W(\cdot):\mbb{R}^{n_x}\rightarrow\mbb{R}^{n_x\times n_x}$ as the following
\begin{align}
	\m W(\m \gamma, \m x_0) := \m {J}_g^{\top}(\m x_0)\m {J}_g(\m x_0).\label{eq:obs_gramian_x0}
\end{align} 

The matrix $\m W(\cdot)$ is fundamental for the analysis and solving the system of nonlinear equations as well as for the development of methods presented in this paper.  Namely, the spectral properties of the matrix $\m W(\cdot)$ determine the convergence properties of the Newton's method used for solving the system of nonlinear equations~\eqref{eq:initial_state_zero}~ \cite{nocedal2006numerical}. Note that, in a general case, this matrix is \textit{not} equal to the observability Gramian for linear systems, since constructing it involves the computation of partial derivatives of the IRK equations \eqref{eq:TI_dynamics}. We note here that we have referred to the use of implicit IRK method since it accounts for a wide class of nonlinear networks, however other implicit discretization methods can be utilized to formulate the observability-based sensor selection problem. In our previous work~\cite{Kazma2023}, we perform optimal sensor selection for a class of differential algebraic equations under the trapezoidal implicit method~\cite{Atkinson2011} discretization. 

Motivated by the fact that this matrix is closely related 
 to the empirical observability Gramian~\cite{Qi2015empirical,Haber2017}, we will refer to this matrix as the observability Gramian of the discrete-time system~\eqref{eq:TI_dynamics_compact}-\eqref{eq:DT_measurement_model}. Notice that the Gramian matrix \eqref{eq:obs_gramian_x0} contains the matrix $\tilde{\m C}$, which is a function of the vector $\m \gamma$.

To that end, the sensor selection problem can be mathematically formulated as follows. Let $X=\{\hat{\m x}_{0}^{(1)},\hat{\m x}_{0}^{(2)},\ldots, \hat{\m x}_{0}^{(q)}\}$ be a set of initial conditions of the dynamics~\eqref{eq:TI_dynamics_compact}-\eqref{eq:DT_measurement_model}. This set of initial conditions is chosen by the user. Furthermore, let $r$ be the final number of sensor nodes that is also specified by the user. Then, the sensor nodes are selected as the solution of the following integer optimization problem
\begin{subequations}\label{eq:ssp_gramian}
	\begin{align}
		(\mathbf{P2})\;\;\maximize_{\m\gamma} \;\;\;
		&\mathcal{O}\left(\m \gamma, X \right) \label{eq:ssp_gramian_1}\\
		\subjectto \;\;\;\,& \;\sum_{i=1}^{n_y} \gamma_i = r, \; \m \gamma\in\{0,1\}^{n_y}, \label{eq:ssp_gramian_2}
	\end{align}
\end{subequations}
where $\mathcal{O}\left(\m \gamma, X \right)$ is a user-selected function that quantifies the observability of the system.

The main idea of our approach is to incorporate a number of initial conditions into the function $\mathcal{O}(\cdot)$ that quantifies the system observability. This is because the ``exact'' initial condition of the system is usually uncertain. By relying on a state-averaged observability matrix that is constructed under several predictions/perturbations of the initial state, the sensor selection procedure becomes less sensitive to uncertainties on initial states that are necessary to solve the system of nonlinear equations.

One approach for tackling sensor selection problems within networks, is posing such  combinatorial problem as a constraint set maximization problem~\cite{Krause2008, Summers2016Submodularity}. The rationality behind such approach is later evident when solving to the sensor selection problem, where underlying set function properties allow for a scalable solution to the optimization problem.
As such, the sensor node selection problem $\mathbf{P2}$ can be rewritten as a set maximization problem $\mathbf{P3}$ by defining 
the \textit{set function} $\mathcal{O}{(\mathcal{S})}: 2^{\mathcal{V}}\rightarrow \mbb{R}$ with $\mathcal{V} := \{ i\in\mbb{N}\,|\,0 < i \leq n_y\}$. Herein, the set $\mathcal{V}$ denotes the set of all possible combinations of sensor locations.
\begin{align}
	{(\mathbf{P3})}\;\;	\mathcal{O}^*_{\mathcal{S}} := \maximize_{\mathcal{S}\subseteq\mathcal{V}}\,\, f(\mathcal{S}),\;\; \subjectto\,\, \abs{\mathcal{S}} = r.
\end{align}

 In the context of sensor selection, $\mathbf{P3}$ translates to the problem of finding the best sensor configuration $\mathcal{S}$ containing $r$ number of sensors such that a particular observability metric is maximized. The variable $\m{\Gamma}$ is encoded in the set $\mathcal{S}$, such that for each sensor node a value of $\gamma_j$ is attributed to the set $\mathcal{S}$ at location $j$.

\vspace{-0.05cm}
\section{Observability-Based Sensor Selection}\label{sec:submmodular_greedy}
\vspace{-0.05cm}
In this section, we introduce several observability measures, quantify their properties, and present our approach for solving the problem~\eqref{eq:ssp_gramian}. Our approach is based on defining a state-averaged observability measure and using a greedy algorithm to efficiently solve the sensor selection problem. The justification of using the greedy algorithm will be established by showing that the introduced set function measures retain set function properties---modularity or submodularity.
For the development of our approach we need to obtain a closed-form expression for \eqref{eq:obs_gramian_x0}. The following proposition establishes this expression.
\begin{myprs}\label{prs:param_observ_gramian}
	The parametrized observability Gramian~\eqref{eq:obs_gramian_x0} for the nonlinear discrete-time dynamic networks \eqref{eq:TI_dynamics_compact} with parametrized measurement model \eqref{eq:DT_measurement_model} around a particular initial state $\hat{\m x}_0$ can be expressed as follows
	\begin{align}
		\m W(\m \gamma,\hat{\m x}_0) = \sum_{j=1}^{n_y} \gamma_j \left(\sum_{i=0}^{\mr{N}-1}\left(\dfrac{\partial \hat{\m x}_i}{\partial \hat{\m x}_0}\right)^{\hspace{-0.1cm}\top} \hspace{-0.075cm}\m c_j^\top \m c_j \dfrac{\partial \hat{\m x}_i}{\partial \hat{\m x}_0}\right),\label{eq:param_obs_gramian}
	\end{align}	
	where $\m c_j\in\mbb{R}^{1\times n_x}$ is the $j$-th row of $\m C$. 
\end{myprs}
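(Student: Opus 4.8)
The plan is to unwind the definitions starting from the expression for the Jacobian $\m J_g(\m\gamma,\hat{\m x}_0)$ in \eqref{eq:obs_jacobian_x0_param} and substituting into the defining formula \eqref{eq:obs_gramian_x0}, $\m W(\m\gamma,\hat{\m x}_0)=\m J_g^\top(\hat{\m x}_0)\m J_g(\hat{\m x}_0)$. First I would write $\m J_g(\m\gamma,\hat{\m x}_0)=(\m I\otimes\m\Gamma\m C)\,\m\xi(\hat{\m x}_0)$ where $\m\xi(\hat{\m x}_0)=\mathrm{col}\{\m\xi_i(\hat{\m x}_0)\}_{i=0}^{\mr N-1}$ and $\m\xi_i=\partial\hat{\m x}_i/\partial\hat{\m x}_0$, so that $\m W=\m\xi^\top(\m I\otimes\m\Gamma\m C)^\top(\m I\otimes\m\Gamma\m C)\,\m\xi=\m\xi^\top\bigl(\m I\otimes\m C^\top\m\Gamma^\top\m\Gamma\m C\bigr)\m\xi$, using $(\m A\otimes\m B)^\top=\m A^\top\otimes\m B^\top$ and mixed-product $(\m I\otimes\m\Gamma\m C)^\top(\m I\otimes\m\Gamma\m C)=\m I\otimes(\m\Gamma\m C)^\top(\m\Gamma\m C)$.

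Next I would exploit the block structure: the block-diagonal Kronecker factor $\m I\otimes(\m C^\top\m\Gamma^\top\m\Gamma\m C)$ acts blockwise, so $\m W=\sum_{i=0}^{\mr N-1}\m\xi_i^\top\,\m C^\top\m\Gamma^\top\m\Gamma\m C\,\m\xi_i$. The key algebraic observation is that, since $\m\Gamma=\mathrm{diag}\{\gamma_j\}$ with $\gamma_j\in\{0,1\}$, we have $\m\Gamma^\top\m\Gamma=\m\Gamma^2=\m\Gamma=\mathrm{diag}\{\gamma_j\}$, and therefore $\m C^\top\m\Gamma\m C=\sum_{j=1}^{n_y}\gamma_j\,\m c_j^\top\m c_j$, where $\m c_j$ is the $j$-th row of $\m C$; this is just the expansion of the weighted Gram matrix of the rows of $\m C$. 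Substituting and swapping the two finite sums yields exactly \eqref{eq:param_obs_gramian}:
\begin{align}
	\m W(\m\gamma,\hat{\m x}_0)=\sum_{j=1}^{n_y}\gamma_j\left(\sum_{i=0}^{\mr N-1}\m\xi_i^\top\,\m c_j^\top\m c_j\,\m\xi_i\right)=\sum_{j=1}^{n_y}\gamma_j\left(\sum_{i=0}^{\mr N-1}\left(\dfrac{\partial\hat{\m x}_i}{\partial\hat{\m x}_0}\right)^{\!\top}\m c_j^\top\m c_j\dfrac{\partial\hat{\m x}_i}{\partial\hat{\m x}_0}\right).\nonumber
\end{align}

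The argument is essentially bookkeeping with Kronecker identities, so there is no deep obstacle; the only point requiring a little care is making the block-diagonal action of $\m I\otimes(\cdot)$ explicit and correctly identifying which factor is the $\mr N$-dimensional identity (so that the sum over $i$ ranges over $0,\dots,\mr N-1$, matching the stacking of $\m\xi$ and of $\m g$ in \eqref{eq:initial_state_est_fun}). I would also note in passing that the $i=0$ term contributes $\m c_j^\top\m c_j$ itself since $\partial\hat{\m x}_0/\partial\hat{\m x}_0=\m I$, and that the recursive structure $\partial\hat{\m x}_i/\partial\hat{\m x}_0=\prod_{\ell=0}^{i-1}\partial\hat{\m x}_{\ell+1}/\partial\hat{\m x}_\ell$ from \eqref{eq:partial_g_i} shows these factors are computable by simulating \eqref{eq:TI_dynamics_compact}. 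The form \eqref{eq:param_obs_gramian} is what makes $\m W$ \emph{affine} (modular) in $\m\gamma$, which is precisely the property that the subsequent submodularity/modularity results will rely on.
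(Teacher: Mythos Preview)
Your argument is correct and follows essentially the same route as the paper: expand $\m W=\m\xi^\top(\m I\otimes\m\Gamma\m C)^\top(\m I\otimes\m\Gamma\m C)\m\xi$, use the block-diagonal Kronecker structure to get $\sum_i\m\xi_i^\top\m C^\top\m\Gamma^2\m C\m\xi_i$, invoke $\m\Gamma^2=\m\Gamma$ to write $\m C^\top\m\Gamma\m C=\sum_j\gamma_j\m c_j^\top\m c_j$, and swap the sums. The only cosmetic difference is that you spell out the Kronecker transpose/mixed-product identities explicitly, whereas the paper moves directly to the block sum.
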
 
\begin{proof}
	From  \eqref{eq:obs_jacobian_x0_param} and \eqref{eq:obs_gramian_x0}, it follows that
	\begin{align}
		\m W(\m \gamma,\hat{\m x}_0) &= \m \xi^\top(\hat{\m x}_0)\bmat{\m I \otimes \m \Gamma \m C}^\top \bmat{\m I \otimes \m \Gamma \m C}\m \xi(\hat{\m x}_0) \nonumber \\
		&= \sum_{i=0}^{M-1}\m \xi_i^\top(\hat{\m x}_0) \m C^\top \m \Gamma^2 \m C \m \xi_i(\hat{\m x}_0)\nonumber \\
		&= \sum_{i=0}^{M-1}\m \xi_i^\top(\hat{\m x}_0) \left(\sum_{j=1}^{n_y} \gamma_j\m c_j^\top \m c_j \right) \m \xi_i(\hat{\m x}_0)\nonumber \\
		&= \sum_{i=0}^{M-1}\sum_{j=1}^{n_y}\gamma_j\m \xi_i^\top(\hat{\m x}_0)  \m c_j^\top \m c_j \m \xi_i(\hat{\m x}_0), \label{eq:param_obs_gramian_proof}
	\end{align}
	which holds since $\m \Gamma^2 = \m \Gamma$. Since \eqref{eq:param_obs_gramian_proof} is equivalent to \eqref{eq:param_obs_gramian}, then the proof is complete.
\end{proof}

\subsection{Modular \& Submodular Set Functions}\label{ssec:mod_subod}


There exist several observability measures and metrics. Usually, these measures are expressed on the basis of the rank, smallest eigenvalue, condition number, trace, and determinant of an appropriate matrix---see \cite{Qi2015empirical} and the references therein. Such measures have set function properties, modularity and submodularity, that allow greedy algorithm to solve the optimal sensor selection problem. 
The following definition characterizes modular and submodular set functions \cite{Summers2016Submodularity,lovasz1983submodular}.

\begin{mydef} 
	\label{def:modular_submodular}
	A set function $	\mathcal{O}: 2^{\mathcal{V}}\rightarrow \mbb{R}$ is said to be modular if and only if for any $\mathcal{S}\subseteq\mathcal{V}$ and weight function $w:\mathcal{V}\rightarrow \mbb{R}$ it holds that
	\begin{subequations}
		\begin{align}
			\mathcal{O}(\mathcal{S}) = w(\emptyset) + \sum_{s\in\mathcal{S}} w(s), \label{eq:modular_def}
		\end{align}
		and $	\mathcal{O}(\cdot)$ is said to be submodular if and only if for any $\mathcal{A},\mathcal{B}\subseteq\mathcal{V}$ given that $\mathcal{A}\subseteq\mathcal{B}$, it holds that for all $s\notin\mathcal{B}$
		\begin{align}
			\mathcal{O}(\mathcal{A}\cup\{s\}) - 	\mathcal{O}(\mathcal{A})\geq 	\mathcal{O}(\mathcal{B}\cup\{s\}) - 	\mathcal{O}(\mathcal{B}). \label{eq:submodular_def}
		\end{align}
	\end{subequations}
\end{mydef}

As seen from \eqref{eq:submodular_def}, for any submodular function, the addition of an element $s$ to a smaller subset $\mathcal{A}$ yields a greater reward compared to adding the same element to a bigger subset $\mathcal{B}$. This notion is normally termed as \textit{diminishing return property} \cite{Summers2016Submodularity}. Aside from modularity and submodularity, the notion of monotone increasing and decreasing functions are also important to achieve scalable sensor selection.

\begin{mydef} 
	\label{def:monotone_increasing}
	A set function $	\mathcal{O}: 2^{\mathcal{V}}\rightarrow \mbb{R}$ is called monotone increasing if, for $\mathcal{A},\mathcal{B}\subseteq\mathcal{V}$, $\mathcal{A}\subseteq\mathcal{B}$ implies $	\mathcal{O}(\mathcal{B})\geq 	\mathcal{O}(\mathcal{A})$ and called monotone decreasing if $\mathcal{A}\subseteq\mathcal{B}$ implies $	\mathcal{O}(\mathcal{A})\geq 	\mathcal{O}(\mathcal{B})$.
\end{mydef}

In retrospect with the sensor selection problem posed in $\mathbf{P3}$, the parametrized observability Gramian associated with $\mathcal{S}\subseteq\mathcal{V}$ around a presumed initial state $\hat{\m x}_0$ is defined as 
\begin{align}
	\tilde{\m W}(\mathcal{S},\hat{\m x}_0) := \sum_{j\in \mathcal{S}} \left(\sum_{i=0}^{\mr{N}-1}\m\left(\xi_i\right)^\top \hspace{-0.05cm}\m c_j^\top \m c_j \xi_i\right).~\label{eq:param_obs_gramian_set}
\end{align}

It is worthwhile to note that the notation $j\in \mathcal{S}$ corresponds to every activated sensor such that $\gamma_j = 1$. If the chosen form of the observability measure function  renders $\mathbf{P3}$ to be submodular and monotone increasing, then the greedy algorithm can be used to efficiently determine sensor locations. The greedy algorithm is summarized in Algorithm \ref{algorithm1}. If the function $	\mathcal{O}(\cdot)$ is submodular and monotone increasing, and if the set of sensor locations computed using the greedy algorithm is $\mathcal{S}$, then we have the following performance guarantee\cite{nemhauser1978analysis}
\begin{align*}
	\frac{	\mathcal{O}^*-	\mathcal{O}(\mathcal{S})}{	\mathcal{O}^*-	\mathcal{O}(\emptyset)} \leq \left(\frac{r-1}{r}\right)^r \leq \frac{1}{e},\;\;
\end{align*}
where $\mathcal{O}^*$ is the optimal value of $\mathbf{P3}$ and $e\approx 2.71828$. Note that the above worst-case bound is merely theoretical. For submodular set maximization it has been shown that an accuracy of $99\%$ is achieved~\cite{Summers2016Submodularity}. 
\begin{algorithm}[h]
	\caption{Greedy Algorithm \cite{Summers2016Submodularity}}\label{algorithm1}
	\DontPrintSemicolon
	\textbf{input:} $r$, $\mathcal{V}$ \;
	\textbf{initialize:} $\mathcal{S}\leftarrow \emptyset$, $k \leftarrow 1$ \;
	\While{$k \leq r$}{
		\textbf{compute:} $\mathcal{G}_k = \mathcal{O}(\mathcal{S}\cup \{a\})-\mathcal{O}(\mathcal{S})$, $\forall a\in \mathcal{V}\setminus \mathcal{S}$ \;
		\textbf{assign:} $\mathcal{S}\leftarrow \mathcal{S} \cup \left\{\mathrm{arg\,max}_{a\in \mathcal{V}\setminus \mathcal{S}}\,\mathcal{G}_k \right\}$ \;
		\textbf{update:} $k \leftarrow k + 1$\;
	}
	\textbf{output:} $\mathcal{S}$\;
\end{algorithm}
\vspace{-0.3cm}
\subsection{State-Averaged Observability Sensor Selection}
Ideally, the parametrized Gramian \eqref{eq:param_obs_gramian} should be constructed using the system's actual initial state ${\m x}_0$. Nonetheless,  this state is usually unknown or only some vector entries are known \textit{a priori}.
In practice, we only have a guess of the initial state, that is denoted by $\hat{\m x}_0$. To minimize the variability from quantifying the observability around $\hat{\m x}_0$, we opt to use a state-averaged observability metric which, instead of computing the observability Gramian around a single guess of initial state $\hat{\m x}_0$, alternatively it is computed by taking into account several points of presumed initial states $\hat{\m x}_0^{(\kappa)}$ for $\kappa \in \{1,2,\cdots,q\}$. Using this concept of state-averaged observability,  we introduce the following metric
\begin{align}
\mathcal{O}\left(\mathcal{S}, X \right)=\frac{1}{q}\sum_{\kappa=1}^{q}\mathcal{L}\left(\tilde{\m W}^{(\kappa)}(\mathcal{S},\hat{\m x}_0^{(\kappa)}) \right),
\label{metric1}
\end{align}
where $\mathcal{L}(\cdot)$ is an appropriate function mapping matrix into a scalar
\begin{align}
	\tilde{\m W}^{(\kappa)}(\mathcal{S},\hat{\m x}_0^{(\kappa)}) := \sum_{j\in \mathcal{S}} \left(\sum_{i=0}^{M-1}\m\left(\xi_i^{(\kappa)}\right)^\top \hspace{-0.05cm}\m c_j^\top \m c_j \xi_i^{(\kappa)}\right).
\end{align}

This form of the Gramian matrix is established on the basis of Proposition~\ref{prs:param_observ_gramian}. Using this new measure, 
$\mathbf{P3}$ is developed further into the following set optimization problem
\begin{subequations}\label{eq:ssp_gramian_mean}
	\begin{align}
		(\mathbf{P4})\;\;\maximize_{\mathcal{S}} \;\;\;
		&\mathcal{O}(\mathcal{S}):=\frac{1}{q}\sum_{\kappa=1}^{q}\mathcal{L}\left(\tilde{\m W}^{(\kappa)}(\mathcal{S}) \right),\label{eq:ssp_gramian_mean_1}\\
		\subjectto \;\;\;\,& \; \abs{\mathcal{S}} = r, \; \mathcal{S}\subseteq\mathcal{V}.  \label{eq:ssp_gramian_mean_2}
	\end{align}
\end{subequations} 
\subsection{Modularity \& Submodularity of the Proposed Measures}\label{ssec:mod_subod_obs_metrics}
In the sequel we will analyze the modularity and submodularity properties of the average observability metrics \eqref{metric1}. We will analyze the cases when the function $\mathcal{L}(\cdot)$ is $\mr{trace}$, and $\mr{log}$-$\mr{det}$. 
The following Lemma provides support to the analysis on  
the modularity, submodularity, and monotonicity properties of the average observability metric $\mathcal{O}(\cdot)$, when the function $\mathcal{L}(\cdot)$ in $\mathbf{P4}$ takes the form of the $\mr{trace}$ and $\mr{log}$-$\mr{det}$.

\begin{mylem}\label{lem:convex_comb}
	For set functions $\mathcal{L}_{1}, \mathcal{L}_{2}, \dots, \mathcal{L}_{k}: 2^{\mathcal{V}}\rightarrow \mbb{R}$ that are submodular. 
	Any conic combination, that is, any weighted non-negative sum defined as 
 	\begin{equation}
 		\mathcal{O}(\mathcal{S})\hspace{-0.05cm}:=\hspace{-0.05cm}\sum_{\kappa=1}^{q}w_{k}\mathcal{L}_{k},
 	\end{equation}
	is submodular, such that $w_{k}\geq 0 \;\forall \;k$.
\end{mylem}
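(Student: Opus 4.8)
The plan is to prove the statement directly from Definition~\ref{def:modular_submodular}, since the submodularity inequality~\eqref{eq:submodular_def} is linear in the set function and conic combinations preserve such linear inequalities. First I would fix arbitrary sets $\mathcal{A}\subseteq\mathcal{B}\subseteq\mathcal{V}$ and an element $s\notin\mathcal{B}$, and write down the marginal gain $\Delta_{\mathcal{O}}(s\mid \mathcal{A}) := \mathcal{O}(\mathcal{A}\cup\{s\}) - \mathcal{O}(\mathcal{A})$ for the conic combination $\mathcal{O}(\mathcal{S}) = \sum_{\kappa=1}^{q} w_\kappa \mathcal{L}_\kappa(\mathcal{S})$. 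By linearity of the finite sum, this marginal gain distributes as $\Delta_{\mathcal{O}}(s\mid\mathcal{A}) = \sum_{\kappa=1}^{q} w_\kappa\, \Delta_{\mathcal{L}_\kappa}(s\mid\mathcal{A})$, and similarly for $\mathcal{B}$.

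Next I would invoke the submodularity of each $\mathcal{L}_\kappa$, which gives $\Delta_{\mathcal{L}_\kappa}(s\mid\mathcal{A}) \geq \Delta_{\mathcal{L}_\kappa}(s\mid\mathcal{B})$ for every $\kappa$. Since every weight $w_\kappa \geq 0$, multiplying each inequality by $w_\kappa$ preserves its direction, and summing over $\kappa = 1,\dots,q$ preserves it as well, yielding
\begin{align*}
	\sum_{\kappa=1}^{q} w_\kappa\, \Delta_{\mathcal{L}_\kappa}(s\mid\mathcal{A}) \geq \sum_{\kappa=1}^{q} w_\kappa\, \Delta_{\mathcal{L}_\kappa}(s\mid\mathcal{B}),
\end{align*}
which is exactly $\mathcal{O}(\mathcal{A}\cup\{s\}) - \mathcal{O}(\mathcal{A}) \geq \mathcal{O}(\mathcal{B}\cup\{s\}) - \mathcal{O}(\mathcal{B})$. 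Since $\mathcal{A},\mathcal{B},s$ were arbitrary, $\mathcal{O}(\cdot)$ satisfies~\eqref{eq:submodular_def} and is therefore submodular.

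Honestly, there is no real obstacle here: the result is the standard fact that the cone of submodular functions on a fixed ground set is closed under nonnegative linear combinations, and the proof is a one-line application of linearity plus sign-preservation of inequalities under nonnegative scaling. The only things to be careful about are (i) matching the notation in the lemma statement — the paper writes $\sum_{\kappa=1}^q w_k \mathcal{L}_k$, so I would keep the index bookkeeping consistent and treat $w_k$ as the nonnegative weights — and (ii) noting that nothing about the specific form of $\mathcal{L}_\kappa$ (e.g.\ $\mathrm{trace}$ or $\mathrm{log}$-$\mathrm{det}$ of the Gramian) is needed; the argument is purely about the abstract set-function property, which is why this lemma is stated separately and then applied to the concrete measures in the subsequent results. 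If desired, I would also remark that the analogous statement for modularity holds verbatim (conic combinations of modular functions are modular, since~\eqref{eq:modular_def} is also linear in the weight function), which is what licenses using it for the $\mathrm{trace}$ measure later.
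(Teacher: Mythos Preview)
Your proposal is correct and follows essentially the same approach as the paper's own proof: both argue directly from Definition~\ref{def:modular_submodular} by writing the submodularity inequality for each $\mathcal{L}_\kappa$, multiplying by the nonnegative weight $w_\kappa$, and summing. Your version is slightly more explicit about the linearity step (distributing the marginal gain across the sum), but the underlying argument is identical.
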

\begin{proof}\label{proof:conic}
		We prove the submodularity of a non-negative weighted sum from the definition of submodularity. As such, from Def.~\ref{def:modular_submodular}, we have $\mathcal{A},\mathcal{B}\subseteq\mathcal{V}$ given that $\mathcal{A}\subseteq\mathcal{B}$, and that for all $s\notin\mathcal{B}$
	\begin{align*}
	&\mathcal{L}_{k}(\mathcal{A}\cup\{s\}) - 	\mathcal{L}_{k}(\mathcal{A})\geq 	\mathcal{L}_{k}(\mathcal{B}\cup\{s\}) - 	\mathcal{L}_{k}(\mathcal{B}),
	\end{align*}
	then under a conic combination and based on Def.~\ref{def:modular_submodular} the following holds true
	\begin{align*}
	&\sum_{\kappa=1}^{q}w_{k} \Big(\mathcal{L}_{k}(\mathcal{A}\cup\{s\}) -
	\mathcal{L}_{k}(\mathcal{A})\Big) \\
	&\geq \sum_{\kappa=1}^{q}w_{k} \Big(\mathcal{L}_{k}(\mathcal{B}\cup\{s\}) - \mathcal{L}_{k}(\mathcal{B})\Big),
	\end{align*}
	for any $\mathcal{A},\mathcal{B}\subseteq\mathcal{V}$ given that $\mathcal{A}\subseteq\mathcal{B}$, and for all $s\notin\mathcal{B}.$
\end{proof}

Conic combinations along with set restrictions and contractions are submodularity preserving operations~\cite{Bach2013b}. 
Lemma~\ref{lem:convex_comb} shows that submodularity of the original submodular functions is retained under a non-negative weighted sum and thus formulates the rationale behind developing a state-averaged observability metric. As such, the following proposition shows that the state-averaged $\mr{trace}(.)$ metric is modular. 
\begin{myprs}\label{prs:trace_prop}
	A set function			
	 $\mathcal{O}:\hspace{-0.05cm}2^{\mathcal{V}}\hspace{-0.05cm}\rightarrow\hspace{-0.05cm}\mbb{R}$ defined by 
	 \begin{align}
	\mathcal{O}(\mathcal{S})\hspace{-0.05cm}:=\hspace{-0.05cm}\frac{1}{q}\sum_{\kappa=1}^{q}\mathrm{trace}\left(\tilde{\m W}^{(\kappa)}(\mathcal{S})\right), \label{eq:trace_mod} 
	\end{align}
	for $\mathcal{S}\subseteq\mathcal{V}$ is modular.
\end{myprs}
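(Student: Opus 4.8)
The plan is to show that $\mathcal{O}(\mathcal{S})$ can be written in the form $w(\emptyset) + \sum_{s\in\mathcal{S}} w(s)$ required by Definition~\ref{def:modular_submodular}, which will establish modularity directly. The key observation is that the trace is a linear operator, and the Gramian $\tilde{\m W}^{(\kappa)}(\mathcal{S})$ defined in~\eqref{eq:param_obs_gramian_set} is itself an additive (sum-over-$\mathcal{S}$) object, by Proposition~\ref{prs:param_observ_gramian}. So the composition of a linear map with an additive set-valued map is modular.

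Concretely, first I would fix $\kappa$ and expand $\mathrm{trace}\bigl(\tilde{\m W}^{(\kappa)}(\mathcal{S})\bigr) = \mathrm{trace}\Bigl(\sum_{j\in\mathcal{S}}\sum_{i=0}^{M-1}(\xi_i^{(\kappa)})^\top \m c_j^\top \m c_j \,\xi_i^{(\kappa)}\Bigr)$, then push the trace inside both sums using linearity of the trace, obtaining $\sum_{j\in\mathcal{S}}\sum_{i=0}^{M-1}\mathrm{trace}\bigl((\xi_i^{(\kappa)})^\top \m c_j^\top \m c_j \,\xi_i^{(\kappa)}\bigr)$. Defining the scalar weight $w^{(\kappa)}(j) := \sum_{i=0}^{M-1}\mathrm{trace}\bigl((\xi_i^{(\kappa)})^\top \m c_j^\top \m c_j \,\xi_i^{(\kappa)}\bigr) = \sum_{i=0}^{M-1}\norm{\m c_j \xi_i^{(\kappa)}}_2^2 \ge 0$ for each node $j\in\mathcal{V}$, this reads $\mathrm{trace}\bigl(\tilde{\m W}^{(\kappa)}(\mathcal{S})\bigr) = \sum_{j\in\mathcal{S}} w^{(\kappa)}(j)$.

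Next I would average over $\kappa=1,\dots,q$ and interchange the (finite) sums to get
\begin{align*}
\mathcal{O}(\mathcal{S}) = \frac{1}{q}\sum_{\kappa=1}^{q}\sum_{j\in\mathcal{S}} w^{(\kappa)}(j) = \sum_{j\in\mathcal{S}} \Bigl(\frac{1}{q}\sum_{\kappa=1}^{q} w^{(\kappa)}(j)\Bigr) =: \sum_{j\in\mathcal{S}} w(j),
\end{align*}
where $w(j) := \frac{1}{q}\sum_{\kappa=1}^{q} w^{(\kappa)}(j)$ is a well-defined weight function $w:\mathcal{V}\to\mbb{R}$ that does not depend on $\mathcal{S}$. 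Since $\tilde{\m W}^{(\kappa)}(\emptyset)$ is the empty sum, i.e.\ the zero matrix, we have $\mathcal{O}(\emptyset) = 0$, so taking $w(\emptyset) = 0$ yields exactly $\mathcal{O}(\mathcal{S}) = w(\emptyset) + \sum_{j\in\mathcal{S}} w(j)$, which is~\eqref{eq:modular_def}. Hence $\mathcal{O}(\cdot)$ is modular. (As a byproduct, since each $w(j)\ge 0$, the function is also monotone increasing, which can be remarked but is not needed for the claim.)

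There is essentially no hard step here: the proof is a bookkeeping exercise in linearity of the trace and rearrangement of finite sums, relying entirely on the closed form already secured in Proposition~\ref{prs:param_observ_gramian}. The only place to be slightly careful is to make explicit that the per-node weights $w^{(\kappa)}(j)$ depend only on the trajectory sensitivities $\xi_i^{(\kappa)}$ (which are fixed once the initial-state guesses $\hat{\m x}_0^{(\kappa)}$ are fixed) and on the fixed row $\m c_j$ of $\m C$, and in particular are independent of which other sensors are selected; this is what makes $w$ a legitimate weight function in the sense of Definition~\ref{def:modular_submodular}. Modularity then also implies submodularity (with equality in~\eqref{eq:submodular_def}), so the greedy guarantee applies.
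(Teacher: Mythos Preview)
Your proposal is correct and follows essentially the same approach as the paper: both expand $\mathrm{trace}\bigl(\tilde{\m W}^{(\kappa)}(\mathcal{S})\bigr)$ using the additive form from Proposition~\ref{prs:param_observ_gramian}, pull the sum over $j\in\mathcal{S}$ outside by linearity of the trace, and conclude modularity. Your write-up is in fact more complete than the paper's, since you explicitly construct the weight function $w(j)$ and verify the form in Definition~\ref{def:modular_submodular} (including $w(\emptyset)=0$), whereas the paper stops after the sum rearrangement and simply asserts modularity from linearity.
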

\begin{proof}
For any $\mathcal{S}\subseteq\mathcal{V}$, observe that
\begin{align*}
&\frac{1}{q}\sum_{\kappa=1}^{q}\mathrm{trace}\left(\tilde{\m W}^{(\kappa)}(\mathcal{S})\right)\\
	&\quad = \frac{1}{q}\sum_{\kappa=1}^{q}\mathrm{trace}\left(\sum_{j\in \mathcal{S}} \left(\sum_{i=0}^{M-1}\m\left(\xi_i^{(\kappa)}\right)^\top \hspace{-0.05cm}\m c_j^\top \m c_j \xi_i^{(\kappa)}\right)\hspace{-0.05cm}\right) \\
	&\quad = \sum_{j\in \mathcal{S}}  \left(\frac{1}{q}\sum_{\kappa=1}^{q}\mathrm{trace}\left(\sum_{i=0}^{M-1}\m\left(\xi_i^{(\kappa)}\right)^\top \hspace{-0.05cm}\m c_j^\top \m c_j \xi_i^{(\kappa)}\right)\hspace{-0.05cm}\right),
\end{align*}	
thus showing that $\mr{trace}(\cdot)$ is a linear matrix function and therefore is modular. 
\end{proof}
\begin{figure}[t]
	\centering
	\vspace{-0.1cm}
	\hspace{-0.1cm}
	\subfloat{\includegraphics[keepaspectratio=true,scale=0.55]{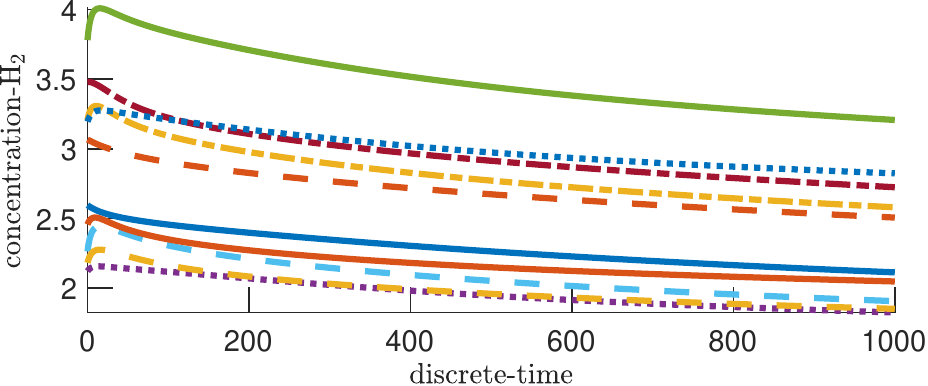}}\vspace*{-0.2cm} \hspace{-0.22cm}
	\subfloat{\includegraphics[keepaspectratio=true,scale=0.55]{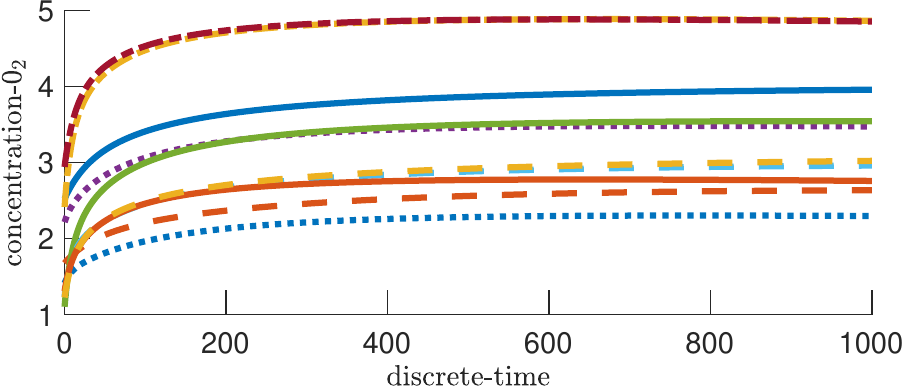}}{}{}
	\caption{State trajectories of the simulated $H_{2}/O_{2}$ combustion network under perturbed initial conditions. The states are concentrations of $H_{2}$ and $O_{2}$ chemical species. }\vspace{-0.6cm}
	\label{fig1new}
\end{figure}

The state-averaged $\mr{log}$-$\mr{det}(\cdot)$ observability metric is submodular and monotone increasing.
\begin{myprs}\label{prs:trace_det_rank_prop}
A set function $\mathcal{O}:\hspace{-0.05cm}2^{\mathcal{V}}\hspace{-0.05cm}\rightarrow\hspace{-0.05cm}\mbb{R}$ characterized by 	
\begin{align}
	\mathcal{O}(\mathcal{S})\hspace{-0.05cm}:=\hspace{-0.05cm}\frac{1}{q}\sum_{\kappa=1}^{q}\mathrm{log\,det}\left(\tilde{\m W}^{(\kappa)}(\mathcal{S})\right), \label{eq:logdet_submodular} 
\end{align} 
for $\mathcal{S}\subseteq\mathcal{V}$ is submodular and monotone increasing.
\end{myprs}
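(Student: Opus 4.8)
The plan is to strip off the averaging with Lemma~\ref{lem:convex_comb} and reduce everything to the single-summand statement that the log-determinant of a set-indexed sum of positive semidefinite matrices is monotone increasing and submodular. First I would record that, by Proposition~\ref{prs:param_observ_gramian}, for each $\kappa$ we may write $\tilde{\m W}^{(\kappa)}(\mathcal{S})=\sum_{j\in\mathcal{S}}\m M_j^{(\kappa)}$ with $\m M_j^{(\kappa)}:=\sum_{i=0}^{\mr{N}-1}\big(\m c_j\,\xi_i^{(\kappa)}\big)^{\top}\big(\m c_j\,\xi_i^{(\kappa)}\big)\succeq 0$, so that each $\tilde{\m W}^{(\kappa)}(\mathcal{S})$ is an additive, set-indexed sum of positive semidefinite matrices. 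Since $1/q>0$ for every $\kappa$, Lemma~\ref{lem:convex_comb} shows that submodularity of each $\mathcal{S}\mapsto\mathrm{log\,det}\big(\tilde{\m W}^{(\kappa)}(\mathcal{S})\big)$ is inherited by the average $\mathcal{O}(\cdot)$ in \eqref{eq:logdet_submodular}, and monotone increase is likewise inherited by a non-negative combination of monotone increasing set functions; hence it suffices to fix a single $\kappa$ and prove the two properties there (I drop the superscript below).

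For monotonicity, if $\mathcal{A}\subseteq\mathcal{B}\subseteq\mathcal{V}$ then $\tilde{\m W}(\mathcal{B})-\tilde{\m W}(\mathcal{A})=\sum_{j\in\mathcal{B}\setminus\mathcal{A}}\m M_j\succeq 0$, i.e.\ $\tilde{\m W}(\mathcal{A})\preceq\tilde{\m W}(\mathcal{B})$ in the L\"owner order, and since $\m X\mapsto\mathrm{log\,det}(\m X)$ is order-preserving on the positive definite cone we obtain $\mathrm{log\,det}\big(\tilde{\m W}(\mathcal{A})\big)\le\mathrm{log\,det}\big(\tilde{\m W}(\mathcal{B})\big)$, which is monotone increase in the sense of Definition~\ref{def:monotone_increasing}. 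For submodularity I would verify the diminishing-returns inequality \eqref{eq:submodular_def} by a one-parameter deformation: fix $\mathcal{A}\subseteq\mathcal{B}$ and $s\notin\mathcal{B}$, put $\m D:=\tilde{\m W}(\mathcal{B})-\tilde{\m W}(\mathcal{A})\succeq 0$, $\m M:=\m M_s\succeq 0$, $\m W_t:=\tilde{\m W}(\mathcal{A})+t\m D$, and define
\[
\phi(t):=\mathrm{log\,det}\big(\m W_t+\m M\big)-\mathrm{log\,det}\big(\m W_t\big),\qquad t\in[0,1],
\]
so that $\phi(0)$ and $\phi(1)$ are exactly the marginal gains of adding $s$ to $\mathcal{A}$ and to $\mathcal{B}$. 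Using $\tfrac{d}{dt}\mathrm{log\,det}(\m X(t))=\trace\big(\m X(t)^{-1}\dot{\m X}(t)\big)$ gives $\phi'(t)=\trace\big((\m W_t+\m M)^{-1}\m D\big)-\trace\big(\m W_t^{-1}\m D\big)$; since $\m M\succeq 0$ forces $\m W_t+\m M\succeq\m W_t\succ 0$, antitonicity of the matrix inverse yields $(\m W_t+\m M)^{-1}\preceq\m W_t^{-1}$, and pairing ordered positive definite matrices against $\m D\succeq 0$ preserves the order of the traces, so $\phi'(t)\le 0$. Hence $\phi$ is nonincreasing, $\phi(1)\le\phi(0)$, which is \eqref{eq:submodular_def}. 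Equivalently, one can compute the marginal gain in closed form with the matrix determinant lemma and appeal to the known submodularity of $\mathcal{S}\mapsto\mathrm{log\,det}(\sum_{j\in\mathcal{S}}\m M_j)$.

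The hard part is not any of these matrix-analytic facts — L\"owner monotonicity of $\mathrm{log\,det}$, antitonicity of inversion, and the trace-derivative identity are all standard — but the well-posedness of $\mathrm{log\,det}$: the Gramian $\tilde{\m W}^{(\kappa)}(\mathcal{S})$ may be singular for small $\mathcal{S}$, so the statement implicitly restricts attention to sets on which the Gramian is positive definite (consistent with the uniform-observability / full-rank requirement used earlier) or, equivalently, adds a vanishing regularizer $\eps\,\m I$ to each $\tilde{\m W}^{(\kappa)}$, after which every step above goes through verbatim.
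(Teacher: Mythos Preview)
Your proposal is correct and follows the same high-level route as the paper: reduce via Lemma~\ref{lem:convex_comb} to a single $\kappa$ and then invoke the submodularity and monotonicity of $\mathcal{S}\mapsto\mathrm{log\,det}\big(\sum_{j\in\mathcal{S}}\m M_j\big)$ for positive semidefinite summands. The difference is in what comes next. The paper's proof is essentially a citation: it states that the $\mathrm{log}$-$\mathrm{det}$ metric is ``well studied in the field of submodular optimization'' and defers to references, then applies Lemma~\ref{lem:convex_comb} to pass to the average. You instead supply a self-contained matrix-analytic argument --- L\"owner monotonicity for the increasing property, and the one-parameter deformation $\phi(t)=\mathrm{log\,det}(\m W_t+\m M)-\mathrm{log\,det}(\m W_t)$ with $\phi'(t)=\trace\big((\m W_t+\m M)^{-1}\m D\big)-\trace\big(\m W_t^{-1}\m D\big)\le 0$ for submodularity. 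Your version is strictly more informative and also surfaces the well-posedness caveat (singularity of $\tilde{\m W}^{(\kappa)}(\mathcal{S})$ for small $\mathcal{S}$ and the need for either a full-rank assumption or an $\eps\m I$ regularizer), which the paper's proof does not address.
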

\begin{proof} For brevity we do not provide the full proof regarding the submodularity and the increasing monotonicity of the $\mr{log}$-$\mr{det}(\cdot)$. Such metric is well studied in the field of submodular optimization and is proved therein---readers are referred to~\cite{Krause2011, Summers2016Submodularity, Zhou2019, Bilmes2022}.
For the state-averaged observability metric $\mathcal{O}(\mathcal{S})$ in~\eqref{eq:logdet_submodular} and based on Lemma~\ref{lem:convex_comb}, the  submodularity of the set function $\mr{log}$-$\mr{det}(\cdot)$ under a non-negative weighted sum is preserved and thus it is submodular.
\end{proof}
The following section showcases the robustness of the sensor selection problem that is based on the proposed state-averaged observability metrics and that it is solved via scalable greedy heuristics.

\section{Numerical Studies}\label{sec:numerical_tests}
In this section, we numerically validate and investigate the effectiveness of the averaged-observability based sensor selection framework. To numerically test our methods, we choose a general nonlinear model of a combustion reaction network. Consider the following list of $N_{r}$ chemical reactions
\begin{align}
	\sum_{i=1}^{n_{x}}q_{ji} \mathcal{R}_{i} \rightleftarrows  
	 \sum_{i=1}^{n_{x}} w_{ji}\mathcal{R}_{i}, \; j=1,2,\ldots, N_{r},
	\label{chemicalReactionNetwork}
\end{align}
where $q_{ji}$ and $w_{ji}$ are stoichiometric coefficients and  $\mathcal{R}_{i}$, $i\in \{1,2,\cdots,n_{x}\}$, are chemical species (notice that the number of chemical species is equal to the global state dimension). With the chemical reactions described in \eqref{chemicalReactionNetwork}, we associate a state-space model. In this representation, the states are concentrations of chemical species. The resulting state equation has the following form~\cite{turns1996introduction,smirnov2014modeling}
\begin{align}
	\dot{\m {x}}(t)=\Theta \boldsymbol{\psi}\left(\m {x}(t)\right),
	\label{continiousTimeFinal}
\end{align}
where $\boldsymbol{\psi}\left(\m {x}\right)=[\psi_{1}\left(\m {x} \right),\psi_{2}\left(\m {x} \right),\ldots, \psi_{n_{r}}\left(\m {x} \right)]^{T}$, and $\Theta=[w_{ji}-q_{ji} ]\in \mathbb{R}^{n_{x}\times N_{r}}$, and  
$\m {x}=[x_{1},x_{2},\ldots, x_{n_{x}}]$, where $x_{i}$, $i\in \{1,2,\cdots,n_{x}\}$ are concentrations of chemical species, and finally $\psi_{j}$, $j=1,2,\ldots, N_{r}$ are the polynomial functions of concentrations defined as follows
\begin{align}
	\psi_{j}\left(\m {x}\right)= v_{j}\prod_{i=1}^{n_{x}}x_{i}^{q_{ji}}-b_{j}\prod_{i=1}^{n_{x}}x_{i}^{w_{kj}},\;r=1,2,\ldots, N_{r},
	\label{polynomialQ}
\end{align}
where $v_{j},b_{j}\in \mathbb{R}_{+}$ are the forward and backward reaction rates that are computed on the basis of the Arrhenius law. 
\begin{figure}[t]
	\vspace{-0.4cm}
	\hspace{-0.1cm}
	\subfloat{\includegraphics[keepaspectratio=true,scale=0.67]{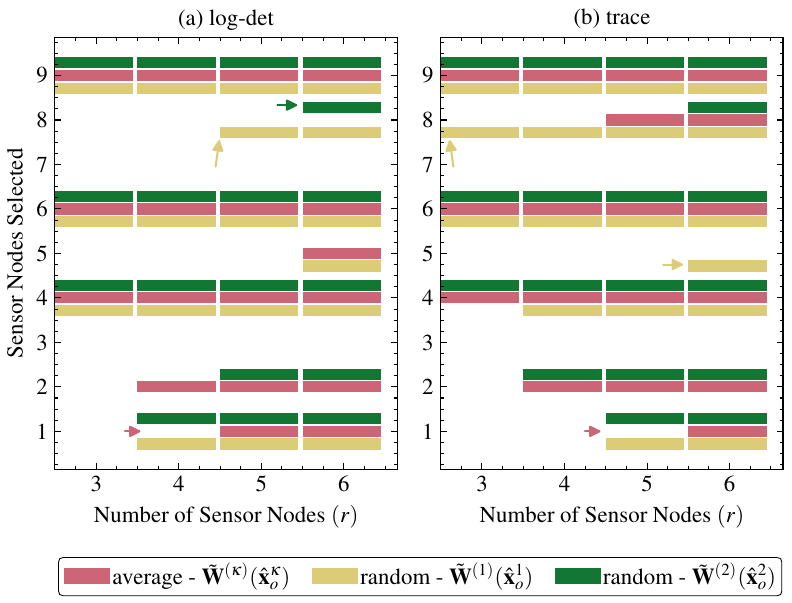}}\vspace*{-0.35cm} \hspace{-0.25cm}
	\caption{Selected sensor nodes resulting from state-averaged observability measures and observability measures that are based on a single randomly selected initial condition. Arrows represent the changes relative to state-averaged metrics.}\vspace{-0.6cm}
	\label{fig:SNS}
\end{figure}

 


In this paper, we consider an $H_{2}/O_{2}$ combustion network. This network has $27$ reactions and $9$ chemical species. The reaction rates are computed using the Cantera software~\cite{Cantera}. We use a chemical reaction network model described in the Cantera database file ``h2o2.cti". In our computations, we assume a temperature of $2500\;[K]$ and an initial pressure equal to the atmospheric pressure. We have chosen a smaller combustion network in order to be able to compare our methods with randomized sensor node placements. To discretize the dynamics we use a discretization constant of $T=1\cdot 10^{-12}$ and we assume the observation window of $\mr{N} = 1000$. We have chosen such value of the discretization constant by analyzing an initial condition response of the system.

\begin{figure}[t]
	\centering 
	\vspace{+0.1cm}
	\subfloat{\includegraphics[keepaspectratio=true,scale=0.5]{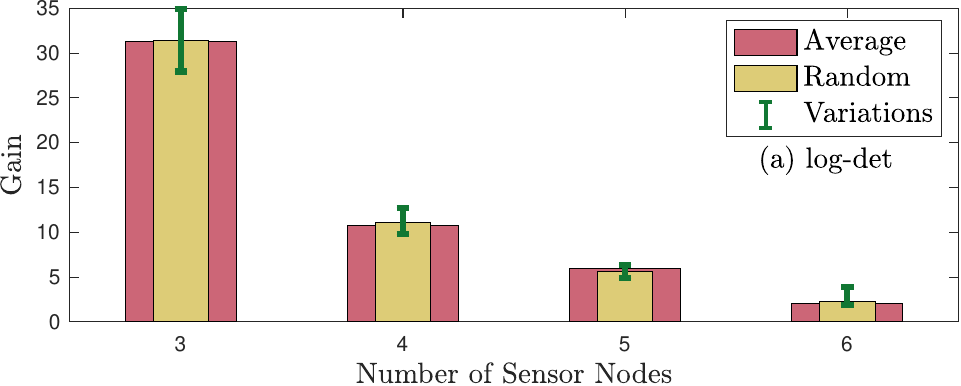}}\vspace*{-0.15cm}{}
	\hspace{-1cm} 
	\subfloat{\includegraphics[keepaspectratio=true,scale=0.5]{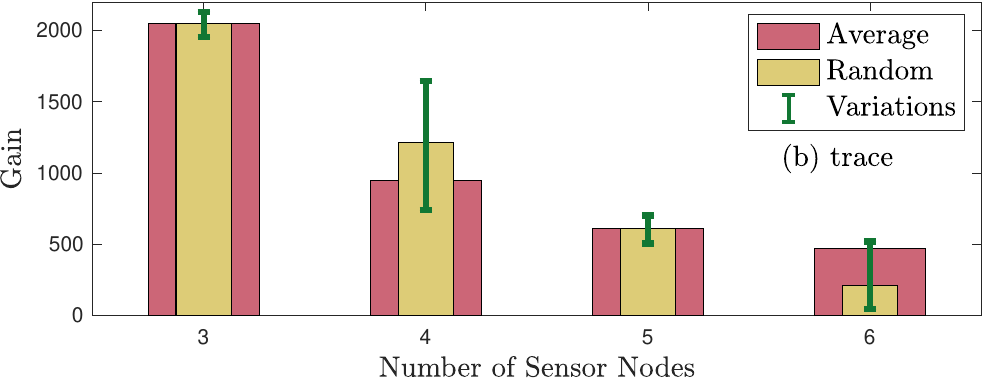}}{}
	\caption{Gain on observability measures with additional sensor node selections. The average (red) represents that of state-averaged observability and that of perturbed single guesses is represented by an average gain (yellow) and the max and min variations on that gain (green).} \vspace{-0.3cm}
	\label{fig:gain}
\end{figure}

We assume $\kappa=10$ and states are selected as random perturbations of the ``true" state $\m {x}=[2,\; 0, \; 0, \; 1, \; 0, \; 0, \; 0, \; 0.2, \; 0]^{T}$. The random perturbations are drawn from a uniform perturbation on the interval $[0,p]$. Fig.~\ref{fig1new} depicts the state trajectories of $H_{2}$ and $O_{2}$ from the simulated $H_{2}/O_{2}$ combustion network under uniform perturbation interval with $p=2$. We note that for each state, different state trajectories are obtained when starting with different initial condition. That is, for each initial condition, the trajectory of the system tends to a different attractor. In dynamical systems, a basin of attractor is a state condition that the systems tends towards as it evolves over a time period~\cite{Dudkowski2016}. This suggests that the nonlinear system~\eqref{continiousTimeFinal} has several basins of attraction and as such, we investigate how such perturbed state trajectories affect the sensor node selection model and asses the robustness of the proposed state-averaged observability measures.
\subsection{Robust Observability-based Sensor Node Selection}
\begin{figure*}[t]
	\centering
	\hspace{-0.12cm}
	\subfloat{\includegraphics[keepaspectratio=true,scale=1]{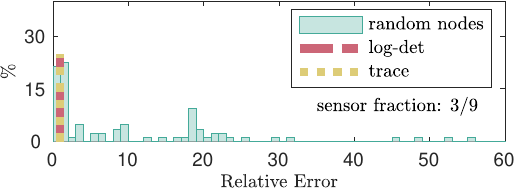}}
	\subfloat{\includegraphics[keepaspectratio=true,scale=1]{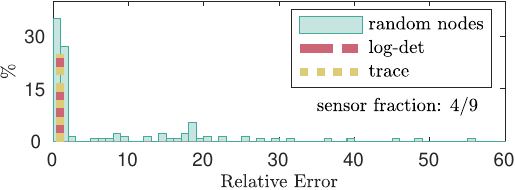}}{}{}
	\subfloat{\includegraphics[keepaspectratio=true,scale=1]{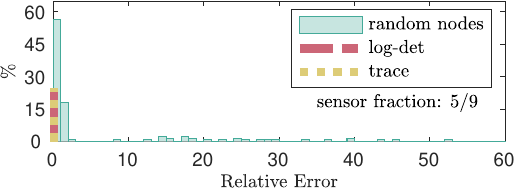}}
	\subfloat{\includegraphics[keepaspectratio=true,scale=1]{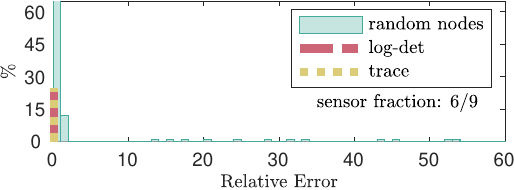}}{}{}
	\caption{Relative error on initial state estimation. Histograms represent the error from randomly selected nodes with specified sensor fractions. The y-axis represents the percentage of sensor configurations that resulted in a certain relative error in the initial state estimation (histograms). The lines, having a percentage of $100\%$, represent the error under the optimal solutions to problem $\mathbf{P4}$.}~\label{fig:relativeerror}
\end{figure*}

Our first goal is to determine optimal sensor locations using the greedy algorithm and the proposed state-averaged observability measures. These observability measures are computed on the basis of the perturbed states  (the ``true" state is not used to compute the observability measure). 
Fig.~\ref{fig:SNS} represents the sensed node locations determined by solving $\mathbf{P4}$ based on the state-averaged observability metrics $(\mr{log}$-$\mr{det},\mr{trace})$ and on the observability metrics associated with a single guess of the perturbed initial state. It can be pointed out that starting with different initial guesses, the optimal set of selected nodes is also different. The different selection as compared to that of the averaged metric are pointed by arrows. We note here that to evaluate the robustness of the state-averaged metric, we perform $\mathbf{P4}$ based on a random generation of $\kappa=10$ initial guesses, such that the same solution is obtained for each generated initial guesses. Thus, showing that the proposed metrics are robust to choice and perturbations of the initial conditions.

Understanding the underlying theory that allows us to optimally choose of $\kappa$ for a specified perturbation $p$ and ensure robustness on $\mathcal{S}^{*}$ from initial state perturbations is outside the scope of this paper and will be investigated in future work. 
For now we note that for the presented general nonlinear network a choice of $\kappa=10$, that is greater than the number of sensed node $n_p =9$, results in robust optimal sensor node selections for a range of $p \leq 20$, which is a relatively high perturbation given the true state. As such, one can infer that the perturbation magnitude is not the critical factor for the choice of $\kappa$. The authors suggest that this is related to the number of states/sensor nodes and the stability of the state trajectories resulting from the perturbed states.

To further understand the performance of the state-averaged observability measure as compared with that based on a single initial guess, we investigate the variations on observability gain; Fig.~\ref{fig:gain} depicts the gain on the observability metrics resulting from an additional sensor selection relative to the prior number of sensed nodes. This variability in gain value for each of the $\mr{log}$-$\mr{det}$ and $\mr{trace}$ metrics based on single initial conditions represents where the change in sensed nodes occurs. For instance, under states with different gain value a sensed node might be forfeited for another. It can be depicted that when the average of the gain resulting from a single perturbed initial conditions is different than that of the state-average metric, a different sensed node is chosen. For example considering the $\mr{log}$-$\mr{det}$ metric and when the number of nodes the be chosen is increasing from 4 to 5, we notice that this average differs. Referring back to Fig.~\ref{fig:SNS} we realize that this is where the sensed nodes chosen is different than the state-averaged metric. Such change in observability degree or gain could be a result of local stability of the system dynamics associated with the perturbed initial conditions. We note, that the reason for this variability, that is postulated to be consequence of the perturbed state trajectories, is not fully investigated for that is out of the scope of this work. We here investigate the effects of the perturbations on the sensor selections and the robustness of the proposed state-averaged observability measures.

\subsection{Sensor Selections on Initial State Estimation}
The second goal is to estimate the "true" state, using the optimally selected sensors and to show that an optimal solution is obtained using a greedy heuristic approach. By showing optimality of the greedy approach, we numerically validate the modularity and  submodularity of the state-averaged observability measures.

We determine the optimal location of sensor nodes, 
then we compute the state estimate for this selection by solving $\mathbf{P1}$ with $\m Q = \m I$. State estimation results are also computed for randomly generated sensor locations under a fixed sensor fraction. The least-squares problem $\mathbf{P1}$ is solved using the MATLAB function \texttt{lsqnonlin} which implements the  trust-region-reflective algorithm. We show the results for the two metrics, $\mr{log}$-$\mr{det}\text{ and }\mr{trace}$. We quantify the final estimation performance by computing the relative estimation error using the following formula $e=\left\|\m {x}_{\text{true}}-\hat{\m {x}} \right\|_{2}/\left\|\m {x}_{\text{true}}  \right\|_{2}$, where $\m {x}_{true}$ is the true state that we want to estimate and $\hat{\m {x}}$ is its estimate computed by solving the nonlinear least squares problem for the fixed sensor location. The relative estimation error is directly related to the degree of observability resulting from the placements. Meaning that for a given sensed node configuration that achieves a high observability degree, the relative error in state estimation would be minimal. 

Fig.~\ref{fig:relativeerror} shows computed relative errors for different fraction of sensor nodes. The errors represented by the histograms are computed for a random selection of sensor location for specified sensor fractions.
At the same time we compute the relative error produced by the state-averaged approach for the two measures (red and yellow vertical lines). The $\mr{log}$-$\mr{det}\text{ and }\mr{trace}$ measures produce optimal results for each of the specified sensor fraction. This shows that the greedy algorithm indeed results in optimal sensor placement under the state-averaged approach. Thereby providing numerical proof that submodularity and modularity is retained. And on this note, we conclude this section. 
\section{Paper Summary and Future Work}\label{sec:conclusion}
This paper investigates the robustness of an observability-based sensor selection problem towards unknown initial conditions. Specifically, our approach is built upon the open-loop lifted observer framework in which the parameterized observability Gramian is constructed. To accommodate the inaccuracy when quantifying the observability due to uncertain initial states, we introduce state-averaged observability measures. The proposed sensor selection problem posed under $\mr{trace}$ and $\mr{log}$-$\mr{det}$ measures is shown to retain the modularity or submodularity properties. Greedy heuristics are employed to efficiently solve the optimization problem and render it scalable to larger nonlinear systems. Numerical results showcase the validity and effectiveness of proposed approach. For our future work, we will further investigate the relation between the proposed state-averaged observability metrics and empirical observability Gramian for discrete-time systems, and how to normalize nonlinear systems that have several basins of attraction such that the sensor placement is robust.
\bibliographystyle{IEEEtran}
\bibliography{bib_file}
\end{document}